\pgfplotsset{compat=1.18}
\begin{document}
\title{Dead Gate Elimination}
%
%
\author{
    Yanbin Chen\orcidID{0000-0002-1123-1432} \and
    Christian B.~Mendl\orcidID{0000-0002-6386-0230} \and
    Helmut Seidl\orcidID{0000-0002-2135-1593}
}
\authorrunning{Y. Chen, et al.}
%
\institute{
School of CIT, Technical University of Munich, Garching 85748, Germany
\email{\{yanbin.chen, christian.mendl, helmut.seidl\}@tum.de}
}
\maketitle              
\begin{abstract}
Hybrid quantum algorithms combine the strengths of quantum and classical computing. Many quantum algorithms, such as the variational quantum eigensolver (VQE), leverage this synergy. However, quantum circuits are executed in full, even when only subsets of measurement outcomes contribute to subsequent classical computations. 
In this manuscript, we propose a novel circuit optimization technique that identifies and removes dead gates. We prove that the removal of dead gates has no influence on the probability distribution of the measurement outcomes that contribute to the subsequent calculation result. 
We implemented and evaluated our optimization on a VQE instance, a quantum phase estimation (QPE) instance, and hybrid programs embedded with random circuits of varying circuit width, confirming its capability to remove a non-trivial number of dead gates in real-world algorithms. The effect of our optimization scales up as more measurement outcomes are identified as non-contributory, resulting in a proportionally greater reduction of dead gates.
\keywords{Quantum compilation \and Dynamic circuit optimization.}
\end{abstract}
\section{Introduction}\label{sec:intro}
In recent efforts to address complex real-world problems, researchers are increasingly integrating quantum and classical computing to use the unique strengths of both paradigms \cite{8638598}. In such interdisciplinary development, domain specialists are exploring ways to implement or even accelerate specific subroutines through quantum circuits tailored to quantum processing units (\emph{QPU}s) \cite{bouland2020prospectschallengesquantumfinance, 10313603, 10313887, quetschlich2024equivalencecheckingclassicalcircuits, jojo2024quantumalgorithmstensorsvd}. Concurrently, quantum experts may incorporate classical computing procedures, given the wealth of sophisticated classical computing procedures that have been developing over decades \cite{10313756, doi:10.1137/S0097539795293172, De_Luca_2021}. 
A popular algorithm framework that allows to take advantage of the strength of both quantum and classical computers is \emph{hybrid programs} \cite{mccaskey2018hybrid}.
In hybrid programs, quantum circuits are embedded as subroutines into programs from a classical host language.
Usually, the classical host program handles optimization, control, and data processing, while the quantum circuits are used for specific calculations that may benefit from quantum speedups.

However, this integration can present challenges\cite{elsharkawy2023integrationquantumacceleratorshigh, rohe2024problemsolutiongeneralpipeline, 10313875}. When researchers work beyond their core expertise, the interplay between classical and quantum components may be suboptimal. This imperfect coupling risks inefficient resource utilization.

An implicit assumption is often made that the circuits are executed as external entities and all qubits are measured in the end and their outcomes are collected for purposes that are not of interest to circuits.
So, circuits are fully executed even if not all measurement outcomes contribute to later calculations.
However, in the following example, we see the potential for circuit simplification when knowing that some measurement outcomes are not needed.

\begin{example}
    In the hybrid program in \cref{fig:example-hybrid-program-1st-measurement-outcome-not-used}, the measurement outcome $o_0$ does not contribute to final results: in $\textbf{Proc}_a$, the initial value of variable $a$, i.e. $o_0$, gets canceled out in the expression $z - 2t$; in $\textbf{Proc}_b$ the initial value of variable $a$ has no impact on the return value, because $0 \leq \eta a \leq 0.5$ and thus the $\eta a$  part is always rounded down to $0$ by $\textbf{int}(\cdot)$ operator. So, if we execute the circuit \cref{fig:circuit-simple-observation}, where the measurement outcome from $q_0$ is always discarded, and we assign an arbitrary value from $\{0,1\}$ to $o_0$, the results of the both $\textbf{Proc}_a$ and $\textbf{Proc}_b$ will not be influenced. 
    Then, we could optimize the program by running the simplified circuit \cref{fig:circuit-simple-observation-after-removal} instead of circuit \cref{fig:circuit-simple-observation}.
    We call gates removed by this analysis dead gates. We will formally justify that this simplification will never influence calculation results of hybrid programs in \cref{sec:methods}.
\end{example}

\begin{figure}[htb]
\centering
\begin{tikzpicture}[>=latex,node distance=1.5em]
 \node(a)
 {
 \begin{quantikz}[row sep=0.1cm]
        \lstick{$q_0$}& \gate[3]{U_3}&\targ{}   &\gate{V_1}&\gate{V_2}&\meter{}\rstick{$o_0 \in \{0, 1\}$}\\
        \lstick{$q_1$}&              &\ctrl{-1} &\ctrl{-1} &          &\meter{}\rstick{$o_1 \in \{0, 1\}$} \\
        \lstick{$q_2$}&              &\gate{W_1}&\ctrl{-2} &\ctrl{-2} &\meter{}\rstick{$o_2 \in \{0, 1\}$}
\end{quantikz}
 };
 \node[below=of a](b)
 {
    \colorbox{pink}{
        \begin{minipage}{0.3\textwidth}
            \SetAlgoLined
            \SetNlSty{textbf}{}{:}
            \begin{algorithmic}
            \State $a, b, c \gets o_0, o_1, o_2$
            \State $x \gets 2a - 3b$
            \State $y \gets b + c + 2$
            \State $z, t \gets xy, ay$
            \State \Return $z - 2t$
            \end{algorithmic}
      \end{minipage}
    }
 };
 \node[below=of b](c)
 {
    \colorbox[rgb]{0.74,0.83,1}{
        \begin{minipage}{0.3\textwidth}
            \SetAlgoLined
            \SetNlSty{textbf}{}{:}
            \begin{algorithmic}
            \State $a, b, c \gets o_0, o_1, o_2$
            \State $\eta \gets \textbf{random}(0.1, 0.5)$
            \State \Return $\textbf{int}(\eta a + bc)$
            \end{algorithmic}
      \end{minipage}
    }
 };
 \node[draw] at (-4.5,0.5) {$\textbf{QC}$};
 \node[draw] at (-4,-2.6) {$\textbf{Proc}_a$};
 \node[draw] at (-4,-5.5) {$\textbf{Proc}_b$};
 \path[->]
(a) edge [] node [red, near start] {$o_0, o_1, o_2\qquad\qquad \ $} (b)
(a) edge [bend left=65] node [blue, below, near end] {$\qquad\qquad o_0, o_1, o_2 $} (c);
\end{tikzpicture}
\caption{An example of a hybrid program, where a quantum circuit $\textbf{QC}$ of $3$ qubits are first executed and then the measurement outcomes $o_0$, $o_1$, $o_2$ from qubits $q_0$, $q_1$, $q_2$, respectively, are dispatched to one of the two classical computing procedures, $\textbf{Proc}_a$, or  $\textbf{Proc}_b$.}
\label{fig:example-hybrid-program-1st-measurement-outcome-not-used}
\end{figure}
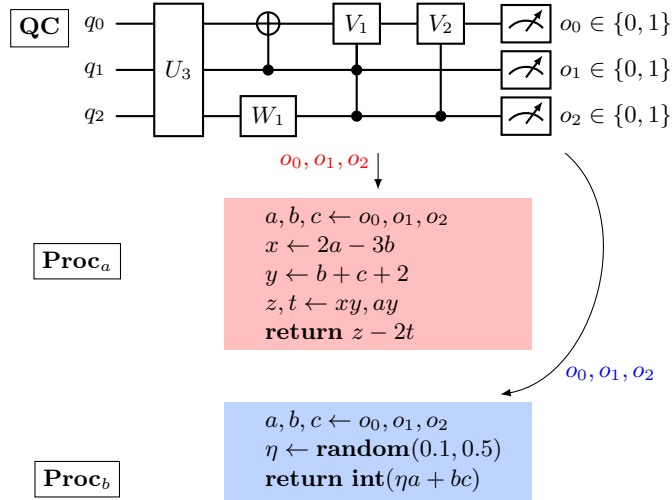

\begin{figure}
    \centering
    \begin{quantikz}[row sep=0.08cm, column sep=0.33cm]
        \lstick{$q_0$}& \gate[3]{U_3}&\targ{}   &\gate{V_1}&\gate{V_2}&\ground{}\\
        \lstick{$q_1$}&              &\ctrl{-1} &\ctrl{-1} &          &\meter{} \\
        \lstick{$q_2$}&              &\gate{W_1}&\ctrl{-2} &\ctrl{-2} &\meter{}
    \end{quantikz}
    \caption{A $3$-qubit circuit. The measurement outcome of the top qubit is discarded.}
    \label{fig:circuit-simple-observation}
\end{figure}
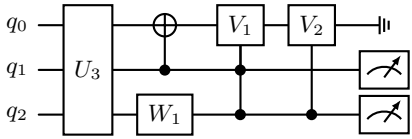

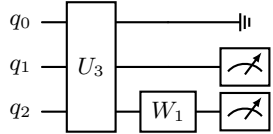
\begin{figure}
    \centering
    \begin{quantikz}[row sep=0.08cm, column sep=0.33cm]
        \lstick{$q_0$}& \gate[3]{U_3}&          &\ground{}\\
        \lstick{$q_1$}&              &          &\meter{} \\
        \lstick{$q_2$}&              &\gate{W_1}&\meter{}
    \end{quantikz}
    \caption{Simplification of the circuit in \cref{fig:circuit-simple-observation}. The probability distribution on measurement outcomes that are not discarded remains unchanged.}
    \label{fig:circuit-simple-observation-after-removal}
\end{figure}

Non-contributory measurement outcomes can also occur in scenarios where the classical computing procedure only queries a subset of the available measurement results. This could make the same simplification possible, as we will demonstrate in the VQE and QPE examples in \cref{sec:evaluation}.

In addtion, qubits that are not explicitly measured, such as ancilla qubits used in intermediate computations can also be interpreted in the same manner. While not explicitly measured at the end of circuits, these qubits could be considered as implicitly measured, with their outcomes being discarded immediately. In this perspective, such qubits fit naturally into the consideration of our paper, as their measurement outcomes do not influence subsequent classical computations.

Several existing works address certain aspects of the matter discussed in this paper.
The partial equivalence checking proposed in \cite{chen_2022}, verifies whether two circuits yield the same probability distribution for a given set of measurement outcomes, but it does not provide a way to simplify circuits while preserving these distributions. Moreover, it requires explicit global unitary operators, the computation of which is infeasible for large-scale circuits due to the inherent complexity of circuit simulation. QuTracer proposed in \cite{li2024qutracermitigatingquantumgate} optimizes circuits by eliminating gates that do not affect a subset of measured qubits, but it lacks a formal framework for this process and may fail to recognize redundant operations—such as SWAP gates that merely permute qubits without altering measurement distributions. In \cite{Abedi_2023}, it is mentioned that a measurement outcome depends only on its causal light cone, yet it does not provide a systematic method to exploit this insight for circuit simplification. Crucially, existing approaches overlook a key optimization opportunity: in hybrid quantum-classical workflows, some measurement outcomes become non-contributory to subsequent classical computations.

In this manuscript, we introduce a novel approach to simplify circuits that uses context information from the classical computing components of the hybrid program. By propagating the contextual information that some measurement outcomes are not contributory, our method identifies and removes dead gates in circuits without changing the semantics of the entire hybrid program, leading to more resource-efficient circuits and quantum-classical integration. We evaluate our method by running it on instances of VQE and QPE algorithm, and on random circuits in \cref{sec:evaluation}.

\section{Preliminaries}\label{sec:prelim}
This manuscript assumes that readers are familiar with the basics of quantum computing. For a detailed introduction to quantum computing, we recommend the following literature \cite{nielsen_QC_2012, rieffel2000introduction, kaye2006introduction}.
In this section, we explain some notations that we will use later.

For an $n$-qubit circuit $C$, we use $C.\textbf{gates}\text{()}$ to denote the set of all gates in $C$. Each gate in $C.\textbf{gates}\text{()}$ is an object storing information, including gate type, the set of qubits it acts on, and the set of gates it depends on. We use $C$ to also denote the unitary matrix of circuit $C$ if no ambiguity is produced. For example, when applying $C$ to a $n$-qubit state $S$, the resulting state is $CS$. We use the following $-$ as an operator to remove one gate from a circuit.

\begin{definition}[$-$ operator]
    For an $n$-qubit circuit $C$ and a gate $g \in C.\textbf{gates}\text{()}$, $C - g$ represents a $n$-qubit circuit obtained by removing the gate $g$ from $C$, namely $(C-g).\textbf{gates}\text{()} :=  C.\textbf{gates}\text{()}\backslash \{g\}$.
\label{def:minus}
\end{definition}

We use the following notation to describe the probability of the measurement outcomes of a subsystem of a quantum state being a given binary string. 
\begin{definition}[Probability distribution of subsystem measurement outcomes]
    For an $n$-qubit state $S_n$ on a set of qubits $Q_n = \{q_0, \dots, q_{n-1}\}$, and a binary string $k$ of length $|k| = |Q|$, where $Q$ is a subset of qubits $\{q_{i_0}, \dots, q_{i_{|k|-1}}\}= Q \subseteq Q_n$ where $i_0< \dots < i_{|k|-1}$,  $\mathcal{P}^{k}_{i_0\dots i_{|k|-1}}[S_n]$ denotes the probability of $q_{i_j}$ measuring $k[j]$ for all $j\in \{0, \dots, |k|-1\}$, where $k[j]$ is the $j$-th element of $k$.
\label{def:prob-dist}
\end{definition}

\begin{example}
    Consider a $2$-qubit state $\ket{\Phi}$ on qubits $q_0$ and $q_1$, where $\ket{\Phi} = \alpha_0\ket{00} + \alpha_1\ket{01} + \alpha_2\ket{10}+ \alpha_3\ket{11}$.
    $\mathcal{P}^{01}_{01}[\ket{\Phi}]$ represents the probability of measuring $0$ on $q_0$ and $1$ on $q_1$, namely the probability of the state collapsing to $\ket{01}$, which is $|\alpha_1|^2$. Similarly, $\mathcal{P}^{10}_{01}[\ket{\Phi}] = |\alpha_2|^2$, $\mathcal{P}^{00}_{01}[\ket{\Phi}] = |\alpha_0|^2$. $\mathcal{P}^{1}_{0}[\ket{\Phi}]$ represents the probability of measuring $1$ on $q_0$, which is $|\alpha_2|^2 + |\alpha_3|^2$, because when $q_0$ is measured $1$, $q_1$ could be measured either $0$ or $1$.
\end{example}

\begin{definition}[Frontier]
    Given a circuit $C$, its \emph{frontier} is a set  $\mathcal{F}_C$ satisfying:
    (a) $\mathcal{F}_C \subseteq C.\textbf{gates}\text{()}$; 
    (b) for any gate $g\in\mathcal{F}_C$, any output wire of $g$ is no input of any other gates.
\label{def:frontier}
\end{definition}

\begin{example}
    The frontier of the following circuit only consists of $V_5$ and $U_2$.
    \begin{equation*}
        \begin{quantikz}[row sep=0.1cm, column sep=0.33cm]
            &\ctrl{1}  &\gate{V_1}&\gate[2]{U1}&          &\gate[2]{U_2}   &\ground{}\\
            &\targ{}   &\gate{V_2}&            &\ctrl{1}  &          &\meter{}\\
            &\gate{V_4}&          &            &\targ{}   &\gate{V_5}    &\meter{} 
        \end{quantikz}
    \end{equation*}
\end{example}

\section{Method}\label{sec:methods}
In this work, We restrict our discussion to circuits that contain no mid-circuit measurements or resets.
We start by introducing some concepts that we will use in later discussions.

For a quantum circuit $C$, we assume that for the outcomes we collect by measuring all qubits, a subset of them has no contribution to the classical computing procedures that come later. We explicitly mark such measurement outcomes as discarded, and we call them \emph{discarded measurement outcomes}. The following notation is put at the end of a qubit wire to denote that the measurement outcome on that qubit is discarded:
    \begin{quantikz}
        &\ground{}
    \end{quantikz}.
On the contrary, a \emph{valid measurement outcome} is the one that is not discarded.

From now on, if a measurement outcome is valid, we omit the symbol of measurement at the end of the qubit wire in the circuit diagram for conciseness.

\begin{definition}[Dead/Valid qubit]
    A qubit is a \emph{dead qubit} if its measurement outcome is discarded.
    A qubit is a \emph{valid qubit} if it is not dead.
\end{definition}

Next, we establish an equivalent relation among circuits that is based on measurements performed only on valid qubits. That is, in this equivalence, we consider two circuits to be equal if their probability distributions of valid measurement outcomes are identical.

\begin{definition}
[Equivalence relative to valid outcomes]
Given two circuits $C_1$ and $C_2$ applied on the same set of qubits $Q_n = \{q_0, \dots, q_{n-1}\}$, for a subset $D\subseteq Q_n$ where all qubits in $D$ are dead, $C_1$ and $C_2$ are equivalent relative to $D$, denoted by $C_1 \equiv_D C_2$, if and only if for any $n$-qubit state $S$ and any binary string $k$ of length $|k| = |Q_n\backslash D|$, $\mathcal{P}^{k}_{i_0\dots i_{|k|-1}}[C_1S] = \mathcal{P}^{k}_{i_0\dots i_{|k|-1}}[C_2S]$, where $Q_n\backslash D = \{q_{i_0},\dots, q_{i_{|k|-1}} \}$ and $i_0 < \dots < i_{|k|-1}$.

\begin{example}
    The following two circuits are equivalent relative to their valid outcomes, because the probability of measuring $0$ on the valid qubit, $q_1$, is the same in both circuits.

\begin{equation*}
    \begin{quantikz}[row sep=0.15cm, column sep=0.33cm]
        \lstick{$q_0$}&\gate{H}&\ground{}\\
        \lstick{$q_1$}&\gate{H}&
    \end{quantikz}
    \equiv_{\{q_0\}}
    \begin{quantikz}[row sep=0.15cm, column sep=0.33cm]
        \lstick{$q_0$}&\gate{Z}&\ground{}\\
        \lstick{$q_1$}&\gate{H}&
    \end{quantikz}
\end{equation*}

\end{example}

\label{def:eq-valid-outcomes}
\end{definition}

Then, we move on to concepts of dead gates, which are essential to our method. Given the knowledge that some measurement outcomes do not influence subsequent calculations and we discard them explicitly, we define a gate as dead if removing it only affects the probability distribution of these discarded measurement outcomes.

\begin{definition}[Dead gate]
    Given a circuit $C$ and a gate $g$ in $C$, and a set of dead qubits $D$, $g$ is a \emph{dead gate} if and only if $C \equiv_D C^{\prime} := C - g$, where $-$ is defined by \cref{def:minus}.
    \label{def:dead-gate}
\end{definition}

Since removing dead gates does not change the probability distribution on valid measurement outcomes, we could simplify circuits by removing such dead gates.
By \cref{theorem:remove-single-qubit-gates-on-unused-wire}, \cref{theorem:remove-multi-ctrl-gates-acting-on-unused-wire}, and \cref{theorem:remove-swap}, we present our approach to identify dead gates and prove that removing these dead gates does not influence the results of calculation, therefore justifying the correctness of our method.

\begin{theorem}
    Given any operator $U$ acting on $n + 1$ qubits and any operator $V$ acting on a single qubit $q_i$, and $q_i$ is dead, it holds that
        \begin{equation}
            \begin{quantikz}[row sep=0.05cm, column sep=0.33cm]
           \lstick{$q_i$} & & \gate[2]{U}&\gate{V}& \ground{}\\
            & \qwbundle{n}          &               & & \qw
            \end{quantikz}
            \equiv_{\{q_i\}}
            \begin{quantikz}[row sep=0.05cm, column sep=0.33cm]
           \lstick{$q_i$} && \gate[2]{U}& \ground{}\\
            & \qwbundle{n}          &&      \qw
            \end{quantikz}\,.
            \label{equation:theorem-remove-single-qubit-gates-on-unused-wire}
        \end{equation}\
    \label{theorem:remove-single-qubit-gates-on-unused-wire}
\end{theorem}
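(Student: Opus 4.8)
The plan is to reduce the claim to the basic fact that a unitary acting only on a subsystem that is subsequently discarded (here, the dead qubit $q_i$) leaves the reduced state of the complementary subsystem unchanged. Following \cref{def:eq-valid-outcomes}, I would fix an arbitrary $(n+1)$-qubit input state $S$ together with an arbitrary binary string $k$ of length $n$ recording a joint outcome on the valid qubits (all qubits other than $q_i$), and show that the two circuits of \eqref{equation:theorem-remove-single-qubit-gates-on-unused-wire} assign the same value of $\mathcal{P}^{k}$ to this outcome.

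First I would set up notation that splits off the dead wire. Writing $\ket{\psi} = US$ and expanding in the computational basis with $q_i$ factored out,
\begin{equation*}
    \ket{\psi} = \sum_{b\in\{0,1\}}\sum_{m} c_{b,m}\,\ket{b}_{q_i}\otimes\ket{m},
\end{equation*}
where $m$ ranges over the basis states of the $n$ valid qubits. Since the outcome on $q_i$ is not part of $k$, by \cref{def:prob-dist} it is summed over, so the probability assigned by the right-hand circuit is
\begin{equation*}
    \mathcal{P}^{k}[US] = \sum_{b\in\{0,1\}} \lvert c_{b,k}\rvert^2.
\end{equation*}

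Next I would apply $V$ on the dead wire. The left-hand state is $(V\otimes I_n)\ket{\psi}$, and its amplitude on $\ket{b'}_{q_i}\otimes\ket{k}$ is $\sum_{b} V_{b'b}\,c_{b,k}$. Summing squared moduli over the discarded outcome $b'$ and expanding yields
\begin{equation*}
    \sum_{b'}\Bigl\lvert \sum_{b} V_{b'b}\,c_{b,k}\Bigr\rvert^2 = \sum_{b,b''} c_{b,k}\,\overline{c_{b'',k}}\sum_{b'}\overline{V_{b'b''}}\,V_{b'b} = \sum_{b,b''} c_{b,k}\,\overline{c_{b'',k}}\,(V^\dagger V)_{b''b}.
\end{equation*}
The decisive step is invoking the unitarity of $V$: since $V^\dagger V = I$, the factor $(V^\dagger V)_{b''b} = \delta_{b''b}$ collapses the double sum to $\sum_{b}\lvert c_{b,k}\rvert^2$, which is exactly the right-hand probability above. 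As $S$ and $k$ were arbitrary, \cref{def:eq-valid-outcomes} yields the equivalence relative to $\{q_i\}$.

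I expect the only genuine obstacle to be notational rather than mathematical: the dead qubit $q_i$ need not sit on the first wire, so the tensor-factor splitting must be phrased for an arbitrary position of $q_i$, and the subscript indices of $\mathcal{P}$ from \cref{def:prob-dist} must be matched carefully against the amplitude sum. All the analytic content is the single identity $V^\dagger V = I$; everything else is rewriting. An equivalent coordinate-free route would observe that tracing out $q_i$ commutes with conjugation by $V\otimes I_n$, so the two circuits produce identical reduced states on the valid qubits, from which equal valid-outcome distributions follow immediately.
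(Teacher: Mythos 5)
Your proof is correct, but it takes a different route from the paper: the paper proves \cref{theorem:remove-single-qubit-gates-on-unused-wire} in one line, as the special case $n_c = 0$ of \cref{theorem:remove-multi-ctrl-gates-acting-on-unused-wire}, whose proof is a case split on the control bits of $k$ after expanding $C_2 S$ in the computational basis. Your direct argument is self-contained and, in one respect, sharper: after applying $V$ on the dead wire you keep the coherent amplitude $\sum_b V_{b'b}c_{b,k}$ and eliminate the cross terms $c_{0,k}\overline{c_{1,k}}(V^\dagger V)_{10}$ explicitly via $V^\dagger V = I$, i.e., via orthogonality of the columns of $V$. The paper's computation for the analogous branch of \cref{theorem:remove-multi-ctrl-gates-acting-on-unused-wire} passes directly to $\sum_b |c_{b\oplus k}|^2(|\alpha_{v_b}|^2 + |\beta_{v_b}|^2)$, which uses only the column normalizations $|\alpha_{v_b}|^2+|\beta_{v_b}|^2=1$ and silently drops exactly those interference terms; they do vanish, but only because of the column orthogonality your $\delta_{b''b}$ step makes explicit. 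What the paper's route buys is economy (one proof covers both theorems, including the controlled case, where your argument would need the extra case split on whether the controls of $k$ are all one); what your route buys is a cleaner and more transparent account of where unitarity enters, and your closing coordinate-free remark --- that $\mathrm{Tr}_{q_i}$ is invariant under conjugation by $V\otimes I$, so the reduced state on the valid qubits is unchanged --- generalizes immediately to any gate supported entirely on dead qubits, beyond the single-qubit case stated here.
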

\begin{proof}
This is a special case of \cref{theorem:remove-multi-ctrl-gates-acting-on-unused-wire}. $\square$

\end{proof}

\begin{remark}
By \cref{def:dead-gate}, gate $V$ in \cref{equation:theorem-remove-single-qubit-gates-on-unused-wire} is a dead gate, so we could optimize the circuit by removing it.
\end{remark}

\begin{theorem}
     Given any operator $U$ acting on $n + 1$ qubits and any operator $V$ acting on a single qubit $q_i$, and $V$ is controlled by $n_c$ qubits, where $n_c + n_r = n$, and $q_i$ is dead, it holds that
        \begin{equation}
            \begin{quantikz}[row sep=0.1cm]
           \lstick{$q_i$} && \gate[3]{U} &\gate{V}& \ground{}\\
            & \qwbundle{n_c}                        &&\ctrl{-1} & \qw \\
            & \qwbundle{n_r}&&&
            \end{quantikz}
            \equiv_{\{q_i\}}
            \begin{quantikz}[row sep=0.1cm]
           \lstick{$q_i$} && \gate[3]{U} & \ground{}\\
            & \qwbundle{n_c}                        & & \qw \\
            & \qwbundle{n_r}&&
            \end{quantikz}\,.
            \label{equation:theorem-remove-multi-ctrl-gates-acting-on-unused-wire}
        \end{equation}\
    \label{theorem:remove-multi-ctrl-gates-acting-on-unused-wire}
\end{theorem}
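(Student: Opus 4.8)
The plan is to pass to the density-matrix picture and reduce the statement $\equiv_{\{q_i\}}$ to an equality of the diagonal entries of two reduced density operators obtained by discarding (tracing out) $q_i$. Fix an arbitrary $(n+1)$-qubit input state $S$ and write $\rho = U S S^{\dagger} U^{\dagger}$ for the state produced by the block $U$ common to both circuits. Let $W$ denote the multi-controlled-$V$ gate of the left-hand circuit, so that the left circuit ends in $W \rho W^{\dagger}$ and the right one in $\rho$, each followed by discarding $q_i$. By \cref{def:prob-dist}, the probability of any valid outcome $k$ (a full assignment to the $n$ valid qubits $Q_n \setminus \{q_i\}$) is exactly the computational-basis diagonal entry $\langle k| \operatorname{Tr}_{q_i}[\,\cdot\,] |k\rangle$ of the corresponding reduced state. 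Hence it suffices to show that $\operatorname{Tr}_{q_i}[W \rho W^{\dagger}]$ and $\operatorname{Tr}_{q_i}[\rho]$ have the same diagonal in the computational basis.

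Next I would decompose the controlled gate along its control register. Writing $P$ for the projector onto the control-active subspace of the $n_c$ control qubits (extended by identity on the $n_r$ idle qubits) and $Q = I - P$ for its complement, we have $W = V \otimes P + I \otimes Q$, where $V$ acts on $q_i$ and $P, Q$ act on the remaining $n$ qubits. Expanding $W \rho W^{\dagger}$ produces four terms, which I would simplify via the partial-trace identity $\operatorname{Tr}_{q_i}[(A \otimes B)\,\rho\,(C \otimes D)] = B\,\operatorname{Tr}_{q_i}[(CA \otimes I)\,\rho]\,D$, valid whenever $A, C$ act on $q_i$ and $B, D$ on its complement. Setting $\tau := \operatorname{Tr}_{q_i}[\rho]$ and using $V^{\dagger} V = I$, the two ``straight'' terms collapse to $P \tau P + Q \tau Q$, while the two cross terms take the sandwiched form $P\,\operatorname{Tr}_{q_i}[(V \otimes I)\rho]\,Q$ and its adjoint $Q\,\operatorname{Tr}_{q_i}[(V^{\dagger} \otimes I)\rho]\,P$, both of which still carry a copy of $V$.

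The crux, and the step I expect to be the main obstacle, is handling these cross terms: as operators they do not vanish, so the full reduced density matrices genuinely differ and one cannot hope to argue at the level of states. The resolution is to use that only the diagonal is needed. Since $P$ and $Q$ are orthogonal projectors that are both diagonal in the computational basis and supported on disjoint sets of basis states, every operator of the form $P X Q$ or $Q X P$ has vanishing computational-basis diagonal; this annihilates both cross terms. Applying the same observation to the decomposition $\tau = P\tau P + P\tau Q + Q\tau P + Q\tau Q$ gives $\operatorname{diag}(\tau) = \operatorname{diag}(P\tau P + Q\tau Q)$. Combining the two computations yields $\operatorname{diag}(\operatorname{Tr}_{q_i}[W\rho W^{\dagger}]) = \operatorname{diag}(\operatorname{Tr}_{q_i}[\rho])$, which is exactly the required equality of valid-outcome probabilities for every input $S$, establishing $\equiv_{\{q_i\}}$. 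As a consistency check, the uncontrolled case $n_c = 0$ forces $P = I$ and $Q = 0$, collapsing everything to $\operatorname{Tr}_{q_i}[V \rho V^{\dagger}] = \operatorname{Tr}_{q_i}[\rho]$ and recovering \cref{theorem:remove-single-qubit-gates-on-unused-wire} even at the level of the full reduced state.
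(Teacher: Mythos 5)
Your proof is correct, but it takes a genuinely different route from the paper. The paper argues directly at the level of amplitudes: it expands $C_2S = \sum_j c_j\ket{j}$ in the computational basis, applies the multi-controlled $V$ explicitly, and splits into two cases according to whether the control substring of the outcome $k$ is all ones, comparing the resulting probabilities term by term. You instead pass to density operators: writing $W = V\otimes P + I\otimes Q$ with $P$ the control-active projector, you reduce the claim to equality of computational-basis diagonals of $\operatorname{Tr}_{q_i}[W\rho W^{\dagger}]$ and $\operatorname{Tr}_{q_i}[\rho]$, collapse the straight terms via $V^{\dagger}V = I$, and kill the cross terms by noting that $PXQ$ and $QXP$ have vanishing diagonal when $P,Q$ are computational-basis-diagonal orthogonal projectors. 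Your decomposition buys three things. First, it avoids the paper's case split on $k$: the projector algebra treats the fired and unfired branches uniformly. Second, it makes the role of unitarity fully explicit; this is worth stressing, because in the paper's all-controls-active case the output amplitude of $\ket{0\oplus k}$ is really $c_{0\oplus k}\alpha_{v_0} + c_{1\oplus k}\alpha_{v_1}$, so the probability contains interference cross terms $2\,\mathrm{Re}\bigl[c_{0\oplus k}\overline{c_{1\oplus k}}\,(\alpha_{v_0}\overline{\alpha_{v_1}} + \beta_{v_0}\overline{\beta_{v_1}})\bigr]$ that vanish only because the columns of $V$ are orthogonal — a step the paper's displayed formula silently absorbs, whereas your $V^{\dagger}V = I$ captures it cleanly. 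Third, your argument extends verbatim to mixed input states, and your $n_c = 0$ consistency check actually proves something slightly stronger than \cref{theorem:remove-single-qubit-gates-on-unused-wire} as stated: in the uncontrolled case the entire reduced density operator, not merely its diagonal, is unchanged. The paper's computation, conversely, is more elementary — it needs only amplitude bookkeeping and exhibits the output distribution explicitly — which fits the paper's pedagogical framing. One presentational nit: when you invoke the identity $\operatorname{Tr}_{q_i}[(A\otimes B)\rho(C\otimes D)] = B\operatorname{Tr}_{q_i}[(CA\otimes I)\rho]D$, you should note it is just cyclicity of the partial trace with respect to operators supported on the traced subsystem; as written it is asserted without proof, though it is standard and easily verified in components.
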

\begin{proof}
    Let the circuit on the left be $C_1$, and the circuit on the right be $C_2$.
    W.l.o.g, we assume that $i = 0$, and the gate $V$ is controlled by qubits $q_1, \dots, q_{n_c}$.
    Suppose the gate $V$ is defined by $V\ket{0} = \alpha_{v_0}\ket{0} + \beta_{v_0}\ket{1}$ and $V\ket{1} = \alpha_{v_1}\ket{0} + \beta_{v_1}\ket{1}$.
    For any input state $S$, we assume that
        $\ket{\Phi} = C_2S =\sum^{N - 1}_{j = 0} c_j\ket{j}$, 
    where $N = 2^{n+1}$, $c_j \in \mathbb{C}$.
    Then, for any $n$-bit binary string $k$, we have
        $\mathcal{P}^{k}_{1...n}[\ket{\Phi}] = |c_{0 \oplus k}|^2 + |c_{1 \oplus k}|^2$, 
    where $\oplus$ is string concatenation (E.g., $00\oplus 11 = 0011$ and $110 \oplus 1 = 1101$).
    In fact, $\ket{\Phi}$ could be rewritten as
    \begin{align}
    \begin{split}
        &\ket{\Phi} = \sum^{N - 1}_{j = 0} c_j\ket{j} = \sum_{\substack{|s| = n_c, \\0\in s}}\sum_{|t| = n_r} c_{0\oplus s\oplus t}\ket{0\oplus s\oplus t} + \\
        &\sum_{\substack{|s| = n_c, \\0\notin s}}\sum_{|t| = n_r} c_{0\oplus s\oplus t}\ket{0\oplus s\oplus t} + 
        \sum_{\substack{|s| = n_c, \\0\in s}}\sum_{|t| = n_r} c_{1\oplus s\oplus t}\ket{1\oplus s\oplus t} + \\
        &\sum_{\substack{|s| = n_c, \\0\notin s}}\sum_{|t| = n_r} c_{1\oplus s\oplus t}\ket{1\oplus s\oplus t}        
    \end{split}
    \end{align}
    So, by applying $C_1$ to $S$, the output state $\ket{\Psi} = C_1 S=(C^{n_c}V)C_2S$ is
    \begin{align}
    \begin{split}
        &\ket{\Psi} = C^{n_c}V \ \ket{\Phi} = 
        \sum_{\substack{|s| = n_c, \\0\notin s}}\sum_{|t| = n_r} \sum^{1}_{b=0} c_{b\oplus s\oplus t}\alpha_{v_b}\ket{0\oplus s\oplus t} + \\
        &\sum_{\substack{|s| = n_c, \\0\notin s}}\sum_{|t| = n_r} \sum^{1}_{b=0} c_{b\oplus s\oplus t}\beta_{v_b}\ket{1\oplus s\oplus t} + 
        \sum_{\substack{|s| = n_c, \\0\in s}}\sum_{|t| = n_r}\sum^{1}_{b=0} c_{b\oplus s\oplus t}\ket{b\oplus s\oplus t}
    \end{split}
    \end{align}
    where $C^{n_c}V$ denotes the multi-controlled gate $V$.\\
    If $\exists l\in \{1,\dots,n_c\}$: $k[l] = 0$, then
    $\mathcal{P}^{k}_{1...n}[C_1S]$ is calculated by
    \begin{align}
        \sum^{1}_{b=0}\sum_{s\oplus t = k} |c_{b\oplus s\oplus t}|^2
        = |c_{0\oplus k}|^2+
        |c_{1\oplus k}|^2
        = \mathcal{P}^{k}_{1...n}[C_2S]
    \end{align}    
    If $\forall l \in \{1,\dots,n_c\}$: $k[l] = 1$, then  $\mathcal{P}^{k}_{1...n}[C_1S]$ is calculated by
    \begin{align}
    \begin{split}
      &\sum^{1}_{b=0}\sum_{s\oplus t = k} |c_{b\oplus s\oplus t}\alpha_{v_b}|^2+|c_{b\oplus s\oplus t}\beta_{v_b}|^2 
      = \sum^{1}_{b=0}\sum_{s\oplus t = k} |c_{b\oplus s\oplus t}|^2 (|\alpha_{v_b}|^2+|\beta_{v_b}|^2) \\
      &= |c_{0\oplus k}|^2(|\alpha_{v_0}|^2+|\beta_{v_0}|^2) +
      |c_{1\oplus k}|^2(|\alpha_{v_1}|^2+|\beta_{v_1}|^2)
      = \mathcal{P}^{k}_{1...n}[C_2S]  
    \end{split}
    \end{align}
    Since our choice of $S$ is arbitrary, by \cref{def:eq-valid-outcomes}, \cref{equation:theorem-remove-multi-ctrl-gates-acting-on-unused-wire} holds. $\square$
\end{proof}

It could happen that removing some gate makes some dead qubits valid and some valid qubits dead, while the probability distribution of valid measurement outcomes is unchanged. For our analysis to encompass this case, we need to extend the equivalence in \cref{def:eq-valid-outcomes} and the dead gate in \cref{def:dead-gate}.

\begin{definition}[Extended equivalence relative to valid outcomes]
Given circuits $C_1$ and $C_2$ applying on the set of qubits $Q_n = \{q_0, \dots, q_{n-1}\}$, for $D1, D2 \subseteq Q_n$ where $|D_1| = |D_2|$ and all qubits in $D1$ and $D_2$ are dead, $C_1 \equiv^{D_1}_{D_2} C_2$ 
iff for any $n$-qubit state $S$ and any binary string $k$ of length $|k| = |Q_n\backslash D_1| = |Q_n\backslash D_2|$, $\mathcal{P}^{k}_{i_0\dots i_{|k|-1}}[C_1S] = \mathcal{P}^{k}_{[e_1/f_1,\dots,e_m/f_m](i_0\dots i_{|k|-1})}[C_2S]$, where $Q_n\backslash D_1 = \{q_{i_0},\dots, q_{i_{|k|-1}} \}$, $i_0 < \dots < i_{|k|-1}$, $D_1\backslash(D_1\cap D_2) = \{e_1,\dots,e_m\}$, $D_2\backslash(D_1\cap D_2) = \{f_1,\dots,f_m\}$, and $[b_1/a_1,\dots,b_p/a_p]s$ denotes a string obtained by for each $l \in \{1,\dots,p\}$ replacing $a_l$ in string $s$ with $b_l$ (E.g., $[1/4,2/5,3/6]456=123$).
\label{def:ext-eq-valid-outcome}
\end{definition}

\begin{definition}[Extended dead gate]
    Given a circuit $C$ and a gate $g$ in $C$ acting on a set of qubits $Q$, $g$ is a dead gate if and only if there exist subsets $D_1, D_2 \subseteq Q$ such that $C \equiv^{D_1}_{D_2} C^{\prime} = C - g$, where $-$ is defined by \cref{def:minus}.
    \label{def:ext-dead-gate}
\end{definition}

\begin{theorem}
     Given any operator $U$ acting on $n + 2$ qubits and a SWAP gate, it holds that
        \begin{equation}
            \begin{quantikz}[row sep=0.1cm]
           \lstick{$q_i$} && \gate[3]{U} &\swap{1}& \ground{}\\
           \lstick{$q_j$} &                        &&\targX{}& \qw \\
            & \qwbundle{n}&&&
            \end{quantikz}
            \equiv^{\{q_i\}}_{\{q_j\}}
            \begin{quantikz}[row sep=0.07cm]
           \lstick{$q_i$} && \gate[3]{U} & \\
           \lstick{$q_j$} &                        && \ground{} \\
            & \qwbundle{n}&&
            \end{quantikz}\,.
            \label{equation:theorem-remove-swap}
        \end{equation}\
    \label{theorem:remove-swap}
\end{theorem}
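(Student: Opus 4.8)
\section*{Proof proposal}

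The plan is to reduce the statement to a direct marginal computation on the common post-$U$ state, in the same spirit as the proof of \cref{theorem:remove-multi-ctrl-gates-acting-on-unused-wire}; the only genuinely new ingredient is the index bookkeeping required by the extended equivalence of \cref{def:ext-eq-valid-outcome}. W.l.o.g.\ I would take $i=0$ and $j=1$, so that the remaining $n$ qubits are $q_2,\dots,q_{n+1}$; write $C_1$ for the left-hand circuit and $C_2$ for the right-hand one. Here $D_1=\{q_0\}$ and $D_2=\{q_1\}$ with $D_1\cap D_2=\emptyset$, so the substitution occurring in \cref{def:ext-eq-valid-outcome} is $[0/1]$: the outcome read on the valid qubit $q_1$ of $C_1$ is to be compared with the outcome read on the valid qubit $q_0$ of $C_2$. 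Thus, for every input state $S$ and every outcome string $k$ of length $n+1$, it suffices to prove $\mathcal{P}^{k}_{1,2,\dots,n+1}[C_1 S]=\mathcal{P}^{k}_{0,2,\dots,n+1}[C_2 S]$.

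I would then fix the shared post-$U$ state $\ket{\Phi}=US$ and expand it as $\ket{\Phi}=\sum_{a,b,t}c_{a\oplus b\oplus t}\ket{a\oplus b\oplus t}$, where $a,b\in\{0,1\}$ label $q_0,q_1$ and $t$ ranges over $n$-bit strings on $q_2,\dots,q_{n+1}$. Writing $k[0]$ for the first outcome bit and $k_{\mathrm{rest}}$ for the remaining $n$ bits, the circuit $C_2$ simply discards $q_1$, so marginalizing over its register gives $\mathcal{P}^{k}_{0,2,\dots,n+1}[C_2 S]=\sum_{b}|c_{k[0]\oplus b\oplus k_{\mathrm{rest}}}|^2$, where $k[0]$ is the outcome on $q_0$.

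For $C_1$ I would apply the SWAP to $\ket{\Phi}$, which acts by $\ket{a\oplus b\oplus t}\mapsto\ket{b\oplus a\oplus t}$, so that the amplitude of $\ket{a'\oplus b'\oplus t'}$ in $C_1 S$ equals $c_{b'\oplus a'\oplus t'}$. Since $C_1$ discards $q_0$ and keeps $q_1$, marginalizing over the $q_0$-register and setting the $q_1$-outcome to $k[0]$ and the rest to $k_{\mathrm{rest}}$ yields $\mathcal{P}^{k}_{1,2,\dots,n+1}[C_1 S]=\sum_{a'}|c_{k[0]\oplus a'\oplus k_{\mathrm{rest}}}|^2$. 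The two expressions are literally the same marginal of $\ket{\Phi}$ --- fix the $q_0$-register to $k[0]$, fix $q_2,\dots,q_{n+1}$ to $k_{\mathrm{rest}}$, and sum the squared modulus over the $q_1$-register --- hence they coincide, and as $S$ was arbitrary the extended equivalence follows from \cref{def:ext-eq-valid-outcome}.

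The computation is routine, and I expect the single point demanding care to be the index bookkeeping rather than any analytic difficulty. The physical content is just that the SWAP relabels the two qubits, so reading $q_j$ after the SWAP in $C_1$ returns exactly the information that $C_2$ reads off $q_i$ with no SWAP at all. The two places to be careful are (i) matching the post-SWAP outcome on $q_j$ to the index substituted by $[e_1/f_1]=[0/1]$, and (ii) verifying that the register marginalized away ($q_0$ in $C_1$ versus $q_1$ in $C_2$) occupies the same slot of $c_{\,\cdot\,}$ after the SWAP relabeling, so that the two sums are identical term by term.
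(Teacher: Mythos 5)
Your proof is correct and takes essentially the same route as the paper, whose entire proof is the one-liner ``It follows directly the definition of SWAP gates and \cref{def:ext-eq-valid-outcome}'': your explicit marginal computation, including the correct identification of the substitution $[0/1]$ and the term-by-term match of the two sums over the discarded register, is exactly the spelled-out version of that appeal. No gaps.
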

\begin{proof}
    It follows directly the definition of SWAP gates and \cref{def:ext-eq-valid-outcome}. $\square$
\end{proof}
\begin{remark}
    The SWAP gate in \cref{equation:theorem-remove-swap} is a dead gate by \cref{def:ext-dead-gate} and can be removed. After removing a SWAP gate, we also need to adapt the qubit mapping, if the qubit mapping/routing is performed at an earlier stage. 
\end{remark}

Our optimization algorithm is shown in \cref{alg:optimization}, of which the asympotic bound is given in \cref{theorem:alg-complexity}.
\begin{algorithm}
\caption{Dead gates removal}
\label{alg:optimization}
\KwData{$C \in circuits$}
\KwResult{$C_{opt}$}
$C_{opt} \gets C$, 
$terminate \gets \textbf{False}$\;
\While{$terminate \neq True\land \emptyset \neq \mathcal{F_C} \gets C_{\text{opt}}.\textbf{frontier}\text{()}$}{
    $terminate \gets \textbf{True}$\;
    \For{$g \in \mathcal{F_C}$}{
        \If{$g$ is a dead gate by \cref{theorem:remove-single-qubit-gates-on-unused-wire}, \cref{theorem:remove-multi-ctrl-gates-acting-on-unused-wire}, or \cref{theorem:remove-swap}}{
            $C_{opt} \gets C_{opt} - g$,
            $terminate \gets \textbf{False}$\;
        }
    }
}
\end{algorithm}
\begin{theorem}[\cref{alg:optimization} is polynomial]
    The time complexity of \cref{alg:optimization} is  $\mathcal{O}(|C.\textbf{gates}\text{()}|^2)$.
\label{theorem:alg-complexity}
\end{theorem}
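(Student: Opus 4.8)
The plan is to set $m = |C.\textbf{gates}\text{()}|$ and to bound separately (i) the number of iterations of the outer \textbf{while} loop and (ii) the cost of a single iteration, and then multiply the two bounds.

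First I would bound the number of \textbf{while} iterations by $m+1$. The governing invariant is that gates are only ever removed from $C_{opt}$, never added, so $|C_{opt}.\textbf{gates}\text{()}|$ is non-increasing and starts at $m$. Each iteration begins by setting $terminate \gets \textbf{True}$, and $terminate$ is reset to $\textbf{False}$ exactly when at least one gate is removed during the inner \textbf{for} loop. Hence every iteration is either \emph{productive} (removes at least one gate) or \emph{unproductive} (removes none, so $terminate$ stays $\textbf{True}$), and an unproductive iteration forces the loop to terminate. Since each productive iteration strictly decreases the gate count by at least one and the count cannot drop below $0$, there are at most $m$ productive iterations, followed by at most one final unproductive iteration. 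Together with the guard that the loop also stops once the frontier is empty (which, as the gate-dependency graph is a finite DAG and therefore always has a maximal element, happens exactly when $C_{opt}$ has no gates), this yields at most $m+1$ iterations.

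Next I would show that each iteration costs $O(m)$. Computing $C_{opt}.\textbf{frontier}\text{()}$ can be done in $O(m)$ time: a single scan recording, for each qubit, its last gate identifies the frontier as those gates that are last on every wire they touch; here I would assume, as is standard, that each gate acts on $O(1)$ qubits, so that $\sum_{g} \text{arity}(g) = O(m)$. The inner \textbf{for} loop then ranges over $|\mathcal{F}_C| \le m$ gates, and for each gate the dead-gate test of \cref{theorem:remove-single-qubit-gates-on-unused-wire}, \cref{theorem:remove-multi-ctrl-gates-acting-on-unused-wire}, or \cref{theorem:remove-swap} only inspects the gate's type and whether the qubits it acts on are dead, which is an $O(1)$ check, while the removal $C_{opt} - g$ is an $O(1)$ set deletion by \cref{def:minus}. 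Thus one iteration is $O(m) + O(m) = O(m)$, and combining with the iteration bound gives a total running time of $(m+1)\cdot O(m) = O(m^2) = O(|C.\textbf{gates}\text{()}|^2)$, as claimed.

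The main obstacle I anticipate is the iteration-count bound rather than the per-iteration cost: one must argue carefully that the loop cannot run more than linearly many times, which hinges on the monotone-removal invariant and on the observation that a single pass removing nothing is enough to trigger termination. The per-iteration analysis is comparatively routine, but it does rest on the modeling assumption of bounded gate arity, which I would state explicitly so that both the frontier computation and the individual dead-gate tests are legitimately $O(1)$ per gate; without it one would instead carry the factor $\sum_g \text{arity}(g)$, which is still $O(m)$ under the stated assumption.
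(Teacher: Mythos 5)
Your proposal is correct and follows essentially the same argument as the paper's proof: the loop runs at most $\mathcal{O}(|C.\textbf{gates}\text{()}|)$ times because every non-final iteration removes at least one gate, and each iteration checks $\mathcal{O}(|C.\textbf{gates}\text{()}|)$ frontier gates. You merely make explicit details the paper leaves implicit (the $\mathcal{O}(m)$ frontier computation and the bounded-arity assumption needed for $\mathcal{O}(1)$ dead-gate tests), which is a reasonable sharpening but not a different route.
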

\begin{proof}
    In each $\textbf{while}$ iteration, $\mathcal{O}(|C.\textbf{gates}\text{()}|)$-many gates are checked to see whether they are dead, and since at least one gate is removed in every iteration (except the last iteration), there are at most $\mathcal{O}(|C.\textbf{gates}\text{()}|)$-many iterations. $\square$
\end{proof}

One should be cautious when considering circuit simplification based on the knowledge about non-contributory measurement outcomes. There are cases where it seems that some gate only "writes" on dead qubits and looks like a dead gate, but it is not a dead gate and thus cannot be removed, as demonstrated in the following example.

\begin{example}
    The following simplification would, in general, lead to a changed probability distribution on valid measurement outcomes. 
\begin{equation*}
        \begin{quantikz}[row sep=0.15cm, column sep=0.33cm]
             &     &         \gate[2]{U} &\gate{V} &&\ground{} \\
            & \qwbundle{n} & &\ctrl{-1}& \gate{W} &
        \end{quantikz}
        \xrightarrow{}
        \begin{quantikz}[row sep=0.15cm, column sep=0.33cm]
                 & &\gate[2]{U} &&\ground{} \\
            & \qwbundle{n}& & \gate{W}& 
        \end{quantikz}
\end{equation*}
    To see this, we consider a special case of it shown as follows:
\begin{equation*}
\begin{quantikz}[row sep=0.15cm, column sep=0.1cm]
    \lstick{$q_0$}&\ctrl{1}&\ground{} \\
    &\targ{} &
\end{quantikz}
=
\begin{quantikz}[row sep=0.15cm, column sep=0.1cm]
     &\gate{H}&\targ{}&\gate{H}&\ground{} \\
    &\gate{H}&\ctrl{-1}&\gate{H} &
\end{quantikz}
\xrightarrow{}
\begin{quantikz}[row sep=0.15cm, column sep=0.2cm]
     &\gate{H}&\gate{H}&\ground{} \\
    &\gate{H}&\gate{H} &
\end{quantikz}
=
\begin{quantikz}[row sep=0.15cm, column sep=0.2cm]
 &\ground{} \\
 &
\end{quantikz}
\not\equiv_{\{q_0\}}
\begin{quantikz}[row sep=0.15cm, column sep=0.1cm]
    \lstick{$q_0$}&\ctrl{1}&\ground{} \\
    &\targ{} &
\end{quantikz}
\end{equation*}
\end{example}

\section{Evaluation}\label{sec:evaluation}
We conduct $3$ sets of experiments to evaluate our method. In the first set, our method is applied to the quantum algorithm VQE. In the second set, an instance of QPE is optimized with our method. In the third set, our method is applied to random circuits. The demo implementation of optimization and experiments is accessible at \url{https://github.com/i2-tum/demo-dead-gate-elimination}.

\paragraph{VQE algorithm}
The VQE is a hybrid quantum-classical algorithm that finds the ground state energy of a quantum system, and it is widely used in areas like quantum chemistry and material science \cite{fedorov2022vqe}.
In each iteration of the VQE algorithm, an \emph{Ansatz}, a parameterized circuit selected from a diverse range of designs, is executed, and measurements are performed on all output qubits. These measurement outcomes are then scheduled to an \emph{optimizer}, a classical computing procedure. This procedure uses them to calculate expectation values of Hamiltonian terms, which are then used to update parameters in the Ansatz.

Due to the broad range of applications of VQE, it has been integrated into well-established toolchains such as Qiskit, allowing it to be utilized as a black-box subroutine \cite{Qiskit}.
While this facilitates the use of quantum computers for domain experts, it also introduces the risk of misalignment between quantum circuits and classical computing procedures,
particularly because there are already numerous choices of Ansatz with different focuses, and many more are expected to be developed in the future\cite{peruzzo2014variational, McClean_2016, Romero_2019}.

Consider the instance of the VQE algorithm constructed in \cref{fig:example-vqe-alg}. In each iteration, the 4-qubit Ansatz $A_1$ is executed and measured. The resulting measurement outcomes, $o_i$ ($i \in {0,\dots,3}$), are then sent to the optimizer. There, two expectation values, $\mathbb{E}_Z$ and $\mathbb{E}_X$, are computed and combined. The resulting values are used to update the parameters in the Ansatz—namely, $\overrightarrow{\theta_r}$ and $\theta_j$ ($j \in {1,\dots,8}$)—which are adapted for the next iteration.

Here, we observe that the calculation of $\mathbb{E}_Z$ and $\mathbb{E}_X$ only depends on the measurement outcomes $o_2$ and $o_3$, meaning $q_0$ and $q_1$ are dead qubits. By applying \cref{alg:optimization} to the Ansatz $A_1$, a simplified Ansatz $A_2$ shown in \cref{fig:simplified-vqe-ansatz} is obtained. Thus, we can optimize the VQE instance by replacing 
$A_1$ with $A_2$, reducing the number of parameterized gates by $4$ and the number of two-qubit gates by $3$ in each iteration. Considering that VQE requires many iterations to converge, the reduction in gate operations becomes even more significant.

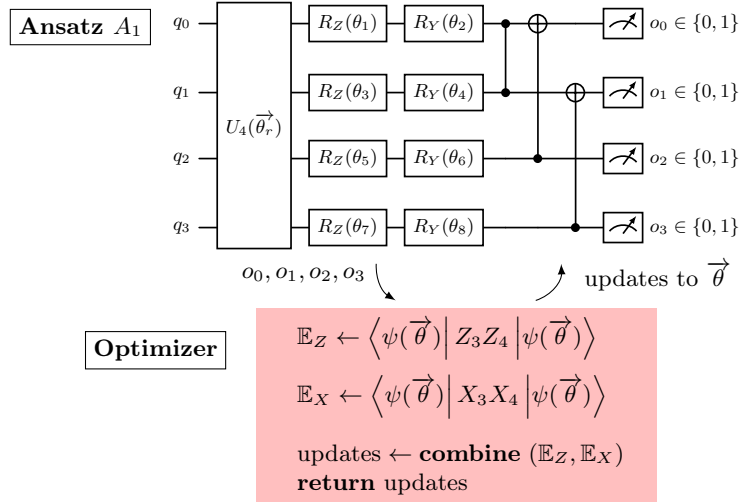
\begin{figure}[htb]
\centering
\begin{tikzpicture}[>=latex,node distance=1.5em]
 \node(a)
 {
\scalebox{.8}{
\begin{quantikz}[column sep = 0.3cm]
        \lstick{$q_0$}&\gate[4]{U_4(\overrightarrow{\theta_r})}&\gate{R_Z(\theta_1)}&\gate{R_Y(\theta_2)}&\ctrl{1}   &\targ{}  &         &\meter{}\rstick{$o_0 \in \{0, 1\}$}\\
        \lstick{$q_1$}&           &\gate{R_Z(\theta_3)}&\gate{R_Y(\theta_4)}&\control{} &         &\targ{}  &\meter{}\rstick{$o_1 \in \{0, 1\}$}\\
        \lstick{$q_2$}&           &\gate{R_Z(\theta_5)}&\gate{R_Y(\theta_6)}&           &\ctrl{-2}&         &\meter{}\rstick{$o_2 \in \{0, 1\}$}\\
        \lstick{$q_3$}&           &\gate{R_Z(\theta_7)}&\gate{R_Y(\theta_8)}&           &         &\ctrl{-2}&\meter{}\rstick{$o_3 \in \{0, 1\}$}  
\end{quantikz}
}

 };
 \node[below=of a](b)
 {
    \colorbox{pink}{
        \begin{minipage}{0.4\textwidth}
            \SetAlgoLined
            \SetNlSty{textbf}{}{:}
            \begin{algorithmic}
            \State $\mathbb{E}_Z \gets\bra{\psi(\overrightarrow{\theta})}Z_3Z_4\ket{\psi(\overrightarrow{\theta})}$
            \State 
            \State $\mathbb{E}_X \gets\bra{\psi(\overrightarrow{\theta})}X_3X_4\ket{\psi(\overrightarrow{\theta})}$
            \State
            \State $\text{updates} \gets \textbf{combine}\text{ (}\mathbb{E}_Z, \mathbb{E}_X\text{)}$
            \State \Return $\text{updates}$
            \end{algorithmic}
      \end{minipage}
    }
 };
 \node[draw] at (-5,1.3) {$\textbf{Ansatz }A_1$};
 \node[draw] at (-4,-3) {$\textbf{Optimizer}$};
 \path[->]
(a) edge [bend right=30] node [near start] {$o_0, o_1, o_2, o_3\qquad\qquad\qquad $} (b)
(b) edge [bend right=37] node [near end] {$\qquad\qquad\qquad\qquad \text{updates to }\overrightarrow{\theta}$} (a);
\end{tikzpicture}
\caption{An instance of VQE algorithm.}
\label{fig:example-vqe-alg}
\end{figure}

\begin{figure}
    \centering
    \begin{quantikz}[row sep=0.15cm, column sep=0.33cm]
        \lstick{$q_0$}&\gate[4]{U_4(\overrightarrow{\gamma})}&                   &&\\
        \lstick{$q_1$}&           &&&\\
        \lstick{$q_2$}&           &\gate{R_Z(\theta_5)}&\gate{R_Y(\theta_6)}&\\
        \lstick{$q_3$}&           &\gate{R_Z(\theta_7)}&\gate{R_Y(\theta_8)}&  
    \end{quantikz}
    \caption{A simplified Ansatz, $A_2$, that can replace $A_1$ in \cref{fig:example-vqe-alg}.}
    \label{fig:simplified-vqe-ansatz}
\end{figure}
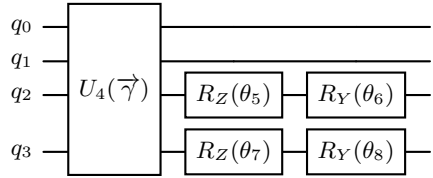
\paragraph{QPE algorithm}
The QPE algorithm determines the phase associated with an eigenvalue of a given unitary operator \cite{kitaev1995quantum, dobvsivcek2008quantum, nielsen_QC_2012}. QPE is employed as a subroutine in various quantum algorithms, among which one of the most famous examples is Shor's algorithm \cite{doi:10.1137/S0097539795293172}.

Consider an instance of QPE constructed in \cref{fig:instance-QPE}, where the QPE circuit is executed and its measurement outcomes constitute the estimated phase $\theta \in [0,1)$ received by the classical computing procedure $\textbf{Proc}_c$. However, we can observe that the most significant bit of $\theta$, namely $\theta_0$, is always subtracted away in the expression $\lambda - \lfloor \lambda \rfloor$. Hence, the initial value of $\theta_0$, i.e., $o_0$, is not contributory, and we identify $q_0$ as a dead qubit. Then, by running our optimization on the QPE circuit, we remove $m$ two-qubit gates and a Hadamard gate and get a simplified circuit in \cref{fig:QPE-optimized-circuit}. Thus we can optimize the QPE instance by replacing the circuit in \cref{fig:instance-QPE} by the circuit \cref{fig:QPE-optimized-circuit}.

\begin{figure}[htb]
\centering
\begin{tikzpicture}[>=latex,node distance=1.5em]
 \node(a)
 {
 \begin{quantikz}[row sep=0.1cm, column sep=0.33cm]
        \lstick{$q_0$}     &\gate[5]{U}&\gate[4]{\text{QFT}^{\dagger}_f}&\gate{R_m^{\dagger}}& \ \ldots &\gate{R_1^{\dagger}}&\gate{H}     &\meter{}\rstick{$o_0$}\\
        \lstick{$q_1$} &&&& \ \ldots& \ctrl{-1} &       &\meter{}\rstick{$o_1$} \\ \\
        \lstick{$q_{m}$} &&&\ctrl{-3}& \ \ldots &&&\meter{}\rstick{$o_m$}\\
        \lstick{\ket{\psi}}&&\qwbundle{r}&&\ \ldots&&&
\end{quantikz}
 };
 \node[below=of a](b)
 {
     \colorbox[rgb]{0.74,0.83,1}{
        \begin{minipage}{0.4\textwidth}
            \SetAlgoLined
            \SetNlSty{textbf}{}{:}
            \begin{algorithmic}
            \State $\theta_0, \dots \theta_m \gets o_0, \dots, o_m$
            \State $\theta \gets 0.\theta_0\dots\theta_m$
            \State $\lambda \gets 10\theta$
            \State \Return $\lambda - \lfloor \lambda \rfloor$
            \end{algorithmic}
      \end{minipage}
    }
 };
 \node[draw] at (-4.7,1.0) {$\textbf{QPE}$};
 \node[draw] at (-4,-2.5) {$\textbf{Proc}_c$};
 \node[] at (-4,0) {$\vdots$};
 \node[] at (0.5,0) {$\iddots$};
 \node[] at (4,0) {$\vdots$};
 \path[->]
(a) edge [] node [blue, near end] {$\qquad\qquad\quad \ o_0, \dots, o_m$} (b);
\end{tikzpicture}
\caption{An instance of QPE, where $U$ consists of Hadamard gates and a sequence of controlled oracles to prepare the state ready for the inverse quantum Fourier transform (QFT \cite{coppersmith2002approximatefouriertransformuseful, nielsen_QC_2012}), $\text{QFT}^{\dagger}_f$ is the front part of the inverse QFT, measurement outcomes $o_i \in \{0, 1\}$ for all $i$, and $\lfloor \cdot \rfloor$ is the floor function that maps a real value to the greatest integer less than or equal to it.}
\label{fig:instance-QPE}
\end{figure}
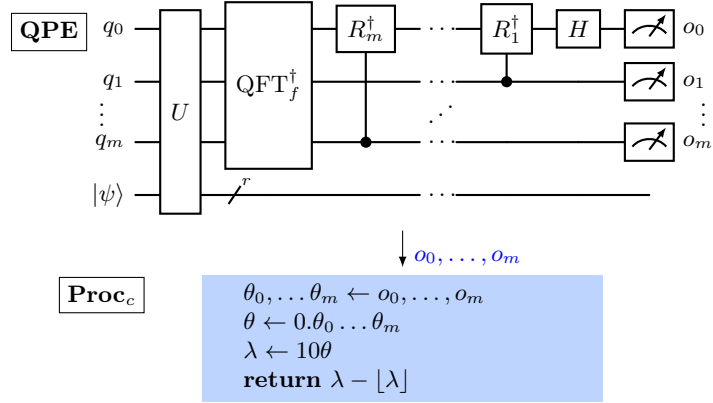

\begin{figure}[htb]
\centering
\begin{tikzpicture}[>=latex,node distance=1.5em]
 \node(a)
 {
 \begin{quantikz}[row sep=0.05cm, column sep=0.51cm]
        \lstick{$q_0$}     &\gate[5]{U}&\gate[4]{\text{QFT}^{\dagger}_f}*&\meter{}\rstick{$o_0$}\\
        \lstick{$q_1$}  &&&\meter{}\rstick{$o_1$} \\ \\
        \lstick{$q_{m}$} &&&\meter{}\rstick{$o_m$}\\
        \lstick{\ket{\psi}}&\qwbundle{r}&&
\end{quantikz}
 };

 \node[] at (-2,0) {$\vdots$};
 \node[] at (2.2,0) {$\vdots$};
\end{tikzpicture}
\caption{A simplified QPE circuit to replace the circuit in \cref{fig:instance-QPE}.}
\label{fig:QPE-optimized-circuit}
\end{figure}
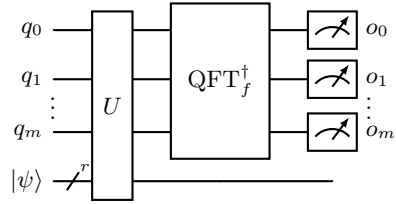

\paragraph{Random circuits}
As an effort to ensure a broad and unbiased evaluation of our optimization algorithm, we also conduct experiments where \cref{alg:optimization} is performed on randomly generated circuits.
We consider hybrid programs consisting of alternating quantum and classical segments, illustrated as follows:
\begin{equation*}
    \begin{tikzpicture}
    \node[fill=blue!60](QC0)
     {
    $\text{QC}_0$
     };
     \node[right=of QC0, fill=red!60](CC0)
     {
    $\text{C}_0$
     };
     
    \node[right=of CC0](ddd)
     {
    $\dots$
     };
     \node[right=of ddd, fill=blue!60](QCi)
     {
    $\text{QC}_{b-1}$
     };
     \node[right=of QCi, fill=red!60](CCi)
     {
    $\text{C}_{b-1}$
     };
     \draw[->] (QC0) -- (CC0);
     \draw[->] (CC0) -- (ddd);
     \draw[->] (ddd) -- (QCi);
     \draw[->] (QCi) -- (CCi);
     
    \end{tikzpicture}
\end{equation*}
Each hybrid program consists of a sequence of $b$ quantum-classical blocks, where each block comprises a quantum circuit $\text{QC}_i$ followed by a classical computation $\text{C}_i$. The parameter $b$ controls the number of such blocks in the hybrid program.
In our evaluation, we set $b=60$ and generate random circuits of \emph{circuit width} ranging from $2$ to $100$ qubits, where the circuit width is defined as the number of qubits in the circuit. For each circuit, the total gate count is $100$ times the circuit width. The circuits are constructed using the universal Clifford + T gate set with single-qubit gates comprising $10\%$ of all generated gates, ensuring a reasonable balance between single- and multi-qubit operations. 
For each circuit width $w$, we generate $1000$ hybrid programs. In each program, every quantum block $\text{QC}_i$ is instantiated with a random circuit generated as described above.
All gates are placed uniformly at random across the circuit, ensuring no positional bias in their distribution.
These circuits are assumed to be first pre-optimized using circuit transpilers, such as Qiskit and t$\ket{\text{ket}}$, ensuring that the input circuits are highly optimized by the state-of-the-art compilation toolchain.

We then apply our optimization algorithm to every quantum circuit $\text{QC}_i$ within each hybrid program under various settings of dead qubit constraints. Specifically, we consider five settings: $1$, $2$, or $3$ dead qubits, as well as when the number of dead qubits is set to $10\%$ or $20\%$ of the circuit width.
For each circuit width and each dead qubit setting, we evaluate the gate count reduction achieved by our algorithm across all circuits in all generated hybrid programs. The mean gate reduction serves as our primary performance metric, providing a robust and comprehensive assessment of the algorithm's effectiveness in realistic hybrid execution scenarios.

The result of our experiments is shown in \cref{fig:result-random}. The experimental results show that our method consistently removes a non-trivial number of gates across settings. This is particularly notable given that our optimization is applied to circuits that are assumed to have already been pre-optimized using circuit transpilers. This demonstrates that our approach achieves further optimization beyond what is achievable with current state-of-the-art quantum compilation tools. 

For the settings with a fixed number of dead qubits (1, 2, and 3), we observe that the number of removed gates is initially high when the circuit width is small, and then decreases and stabilizes as the circuit width increases. This trend reflects a key observation: in small circuits, a fixed number of dead qubits represents a large proportion of the total qubit count (e.g., 1 dead qubit out of 2 or 3 is $50\%$–$33\%$), which creates substantial optimization opportunities. As the circuit grows wider, however, the proportion of dead qubits diminishes, leading to less pronounced impact from the optimization.

The early-stage behavior of settings with a fixed number of dead qubits directly parallels the trends observed in the percentage-based dead qubit settings ($10\%$ and $20\%$). In these settings, the number of removed gates grows stepwise with the circuit width, as the absolute number of dead qubits increases discretely with circuit size. These steps correspond to the increase in dead qubit count, and each jump leads to a corresponding spike in optimization gain. This confirms that our optimization method's effectiveness is primarily driven by the proportion of dead qubits rather than their absolute number alone.

To assess the practical efficiency of our method, we evaluate its runtime as a function of circuit width, as shown in \cref{fig:result-random}. The results indicate that the execution time increases approximately linearly with the circuit width, even in cases with a significant proportion of dead qubits. This empirical observation suggests that our approach remains efficient in practice. While the asymptotic complexity analysis in \cref{theorem:alg-complexity} establishes a worst-case quadratic dependence on the number of gates, our experimental results demonstrate that, for realistic circuits, the algorithm exhibits near-linear scaling with circuit width. This suggests that our method is practically efficient and scalable.

\begin{figure}[htb]
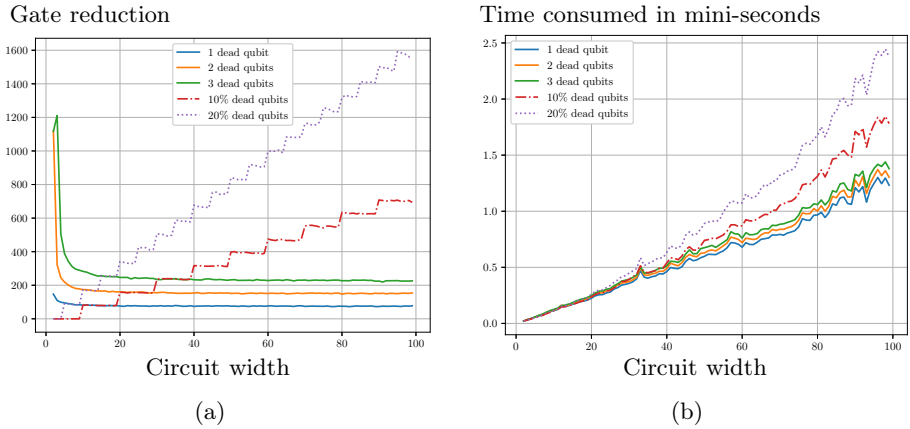

\centering
    \begin{subfigure}[b]{0.48\textwidth}
        \centering
        \begin{tikzpicture}[>=latex,node distance=0.3em]
         \node(a)
         {
        \resizebox{1\textwidth}{!}{\input{figures/100_random_results_tight_after_debug.pgf}}
         };
         \node[] at (0,-2.3)
        {\text{Circuit width}};         \node[] at (-1.7,2.4)
        {\text{Gate reduction}};
        \end{tikzpicture}
        \caption{}
        \label{fig:ext_num}
    \end{subfigure}
    \hfill
    \begin{subfigure}[b]{0.48\textwidth}
        \centering
         \begin{tikzpicture}[>=latex,node distance=0.3em]
         \node(a)
         {
        \resizebox{1\textwidth}{!}{\input{figures/time_100_random_results_tight.pgf}}
         };
         \node[] at (0,-2.3)
        {\text{Circuit width}};         \node[] at (-0.6,2.4)
        {\text{Time consumed in mini-seconds}};
        \end{tikzpicture}
        \caption{}
        \label{fig:ext_extr}
    \end{subfigure}
    \caption{(a) Gate reduction obtained by performing our optimization \cref{alg:optimization} with different dead-qubit settings on hybrid programs embedded with random circuits of circuit width ranging from $2$ to $100$. (b) The corresponding time consumed in mini seconds.}
    \label{fig:result-random}
\end{figure}

\section{Related works}\label{sec:rel_works}The concept of dead gates is inspired by the concept of dead variables in liveness analysis in compiler constructions of classical programming languages \cite{muchnick1997advanced, seidl2012compiler}.
Static analysis stands among the most potential approaches to automate the detection of dead qubits from the context of circuits, and they have proven to be effective in bug detection, program analysis, and circuit optimization \cite{paltenghi2024analyzing, zhao2023qchecker, 10189136, chen_QCP_2023, chen2024reducingmidcircuitmeasurementsprobabilistic}.

\section{Conclusion and future works}\label{sec:conclusion}
Our method demonstrates a practical approach to optimizing quantum circuits in hybrid programs by taking into account context from the classical host. By identifying dead gates and simplifying circuits accordingly, we achieve significant reductions in gate count when quantum-classical integration is suboptimal.
The evaluation of our method on the VQE and QPE instances and on random circuits confirms its potential to improve the quality of circuits. 
It would be an interesting future work to investigate how to construct the dead gate analysis on dynamic circuits. We think one of the challenges there is that mid-circuit measurements come with side effect even if they are performed on dead qubits.




\begin{credits}
\subsubsection{\ackname} This work is supported by the Bavarian state government through the project Munich Quantum Valley with funds from the Hightech Agenda Bayern Plus.

\subsubsection{\discintname}
The authors have no competing interests to declare that are relevant to the content of this article. 
\end{credits}
%
%
%
\bibliographystyle{splncs04}

\begin{thebibliography}{10}
\providecommand{\url}[1]{\texttt{#1}}
\providecommand{\urlprefix}{URL }
\providecommand{\doi}[1]{https://doi.org/#1}

\bibitem{Abedi_2023}
Abedi, E., Beigi, S., Taghavi, L.: Quantum lazy training. Quantum  \textbf{7},
  ~989 (Apr 2023). \doi{10.22331/q-2023-04-27-989},
  \url{http://dx.doi.org/10.22331/q-2023-04-27-989}

\bibitem{10313887}
Bermot, E., Zoufal, C., Grossi, M., Schuhmacher, J., Tacchino, F., Vallecorsa,
  S., Tavernelli, I.: { Quantum Generative Adversarial Networks For Anomaly
  Detection In High Energy Physics }. In: 2023 IEEE International Conference on
  Quantum Computing and Engineering (QCE). pp. 331--341. IEEE Computer Society,
  Los Alamitos, CA, USA (Sep 2023). \doi{10.1109/QCE57702.2023.00045},
  \url{https://doi.ieeecomputersociety.org/10.1109/QCE57702.2023.00045}

\bibitem{bouland2020prospectschallengesquantumfinance}
Bouland, A., van Dam, W., Joorati, H., Kerenidis, I., Prakash, A.: Prospects
  and challenges of quantum finance (2020),
  \url{https://arxiv.org/abs/2011.06492}

\bibitem{chen_2022}
Chen, T.F., Jiang, J.H.R., Hsieh, M.H.: Partial equivalence checking of quantum
  circuits. In: 2022 IEEE International Conference on Quantum Computing and
  Engineering (QCE). p. 594–604. IEEE (Sep 2022).
  \doi{10.1109/qce53715.2022.00082},
  \url{http://dx.doi.org/10.1109/QCE53715.2022.00082}

\bibitem{chen2024reducingmidcircuitmeasurementsprobabilistic}
Chen, Y., Fulginiti, I., Mendl, C.B.: {Probabilistic Circuit Model}. In: {2024
  International Conference on Quantum Computing and Engineering} (9 2024).
  \doi{10.1109/QCE60285.2024.10379}

\bibitem{chen_QCP_2023}
Chen, Y., Stade, Y.: Quantum constant propagation. In: Hermenegildo, M.V.,
  Morales, J.F. (eds.) Static Analysis. pp. 164--189. Springer Nature
  Switzerland, Cham (2023),
  \url{https://link.springer.com/chapter/10.1007/978-3-031-44245-2_9}

\bibitem{coppersmith2002approximatefouriertransformuseful}
Coppersmith, D.: An approximate fourier transform useful in quantum factoring
  (2002), \url{https://arxiv.org/abs/quant-ph/0201067}

\bibitem{De_Luca_2021}
De~Luca, G.: A survey of nisq era hybrid quantum-classical machine learning
  research. Journal of Artificial Intelligence and Technology  \textbf{2}(1),
  9–15 (Dec 2021). \doi{10.37965/jait.2021.12002},
  \url{https://ojs.istp-press.com/jait/article/view/60}

\bibitem{dobvsivcek2008quantum}
Dob{\v{s}}{\'\i}{\v{c}}ek, M.: Quantum computing, phase estimation and
  applications. arXiv preprint arXiv:0803.0909  (2008)

\bibitem{elsharkawy2023integrationquantumacceleratorshigh}
Elsharkawy, A., To, X.T.M., Seitz, P., Chen, Y., Stade, Y., Geiger, M., Huang,
  Q., Guo, X., Ansari, M.A., Mendl, C.B., Kranzlmüller, D., Schulz, M.:
  Integration of quantum accelerators with high performance computing -- a
  review of quantum programming tools (2023),
  \url{https://arxiv.org/abs/2309.06167}

\bibitem{10313875}
Elsharkawy, A., To, X.T.M., Seitz, P., Chen, Y., Stade, Y., Geiger, M., Huang,
  Q., Guo, X., Ansari, M.A., Ruefenacht, M., Schulz, L., Karlsson, S., Mendl,
  C.B., Kranzlmüller, D., Schulz, M.: Challenges in hpcqc integration. In:
  2023 IEEE International Conference on Quantum Computing and Engineering
  (QCE). vol.~02, pp. 405--406 (2023). \doi{10.1109/QCE57702.2023.10304}

\bibitem{fedorov2022vqe}
Fedorov, D.A., Peng, B., Govind, N., Alexeev, Y.: Vqe method: a short survey
  and recent developments. Materials Theory  \textbf{6}(1), ~2 (2022)

\bibitem{jojo2024quantumalgorithmstensorsvd}
Jojo, J., Khandelwal, A., Chandra, M.G.: Quantum algorithms for tensor-svd
  (2024), \url{https://arxiv.org/abs/2405.19485}

\bibitem{kaye2006introduction}
Kaye, P., Laflamme, R., Mosca, M.: An introduction to quantum computing. OUP
  Oxford (2006)

\bibitem{kitaev1995quantum}
Kitaev, A.Y.: Quantum measurements and the abelian stabilizer problem. arXiv
  preprint quant-ph/9511026  (1995)

\bibitem{li2024qutracermitigatingquantumgate}
Li, P., Liu, J., Gonzales, A., Saleem, Z.H., Zhou, H., Hovland, P.: Qutracer:
  Mitigating quantum gate and measurement errors by tracing subsets of qubits
  (2024), \url{https://arxiv.org/abs/2404.19712}

\bibitem{10313603}
Matondo-Mvula, N., Elleithy, K.: Advances in quantum medical image analysis
  using machine learning: Current status and future directions. In: 2023 IEEE
  International Conference on Quantum Computing and Engineering (QCE). vol.~01,
  pp. 367--377 (2023). \doi{10.1109/QCE57702.2023.00049}

\bibitem{8638598}
McCaskey, A., Dumitrescu, E., Liakh, D., Humble, T.: Hybrid programming for
  near-term quantum computing systems. In: 2018 IEEE International Conference
  on Rebooting Computing (ICRC). pp. 1--12 (2018).
  \doi{10.1109/ICRC.2018.8638598}

\bibitem{mccaskey2018hybrid}
McCaskey, A., Dumitrescu, E., Liakh, D., Humble, T.: Hybrid programming for
  near-term quantum computing systems. In: 2018 IEEE international conference
  on rebooting computing (ICRC). pp. 1--12. IEEE (2018)

\bibitem{McClean_2016}
McClean, J.R., Romero, J., Babbush, R., Aspuru-Guzik, A.: The theory of
  variational hybrid quantum-classical algorithms. New Journal of Physics
  \textbf{18}(2),  023023 (feb 2016). \doi{10.1088/1367-2630/18/2/023023},
  \url{https://dx.doi.org/10.1088/1367-2630/18/2/023023}

\bibitem{muchnick1997advanced}
Muchnick, S.: Advanced compiler design implementation. Morgan kaufmann (1997)

\bibitem{nielsen_QC_2012}
Nielsen, M.A., Chuang, I.L.: Quantum {{Computation}} and {{Quantum
  Information}}: 10th {{Anniversary Edition}}. {Cambridge University Press}, 1
  edn. (Jun 2012). \doi{10.1017/CBO9780511976667}

\bibitem{paltenghi2024analyzing}
Paltenghi, M., Pradel, M.: Analyzing quantum programs with lintq: A static
  analysis framework for qiskit. Proceedings of the ACM on Software Engineering
   \textbf{1}(FSE),  2144--2166 (2024)

\bibitem{peruzzo2014variational}
Peruzzo, A., McClean, J., Shadbolt, P., Yung, M.H., Zhou, X.Q., Love, P.J.,
  Aspuru-Guzik, A., O’brien, J.L.: A variational eigenvalue solver on a
  photonic quantum processor. Nature communications  \textbf{5}(1), ~4213
  (2014)

\bibitem{Qiskit}
{Qiskit contributors}: Qiskit: An open-source framework for quantum computing
  (2023). \doi{10.5281/zenodo.2573505}

\bibitem{quetschlich2024equivalencecheckingclassicalcircuits}
Quetschlich, N., Forster, T., Osterwind, A., Helms, D., Wille, R.: Towards
  equivalence checking of classical circuits using quantum computing (2024),
  \url{https://arxiv.org/abs/2408.14539}

\bibitem{rieffel2000introduction}
Rieffel, E., Polak, W.: An introduction to quantum computing for
  non-physicists. ACM Computing Surveys (CSUR)  \textbf{32}(3),  300--335
  (2000)

\bibitem{rohe2024problemsolutiongeneralpipeline}
Rohe, T., Grätz, S., Kölle, M., Zielinski, S., Stein, J., Linnhoff-Popien,
  C.: From problem to solution: A general pipeline to solve optimisation
  problems on quantum hardware (2024), \url{https://arxiv.org/abs/2406.19876}

\bibitem{Romero_2019}
Romero, J., Babbush, R., McClean, J.R., Hempel, C., Love, P.J., Aspuru-Guzik,
  A.: Strategies for quantum computing molecular energies using the unitary
  coupled cluster ansatz. Quantum Science and Technology  \textbf{4}(1),
  014008 (oct 2018). \doi{10.1088/2058-9565/aad3e4},
  \url{https://dx.doi.org/10.1088/2058-9565/aad3e4}

\bibitem{seidl2012compiler}
Seidl, H., Wilhelm, R., Hack, S.: Compiler Design: Analysis and Transformation.
  Springer (2012)

\bibitem{doi:10.1137/S0097539795293172}
Shor, P.W.: Polynomial-time algorithms for prime factorization and discrete
  logarithms on a quantum computer. SIAM Journal on Computing  \textbf{26}(5),
  1484--1509 (1997). \doi{10.1137/S0097539795293172},
  \url{https://doi.org/10.1137/S0097539795293172}

\bibitem{10313756}
Veshchezerova, M., Somov, M., Bertsche, D., Limmer, S., Schmitt, S.,
  Perelshtein, M., Joshi~Tripathi, A.: { A Hybrid Quantum-Classical Approach to
  the Electric Mobility Problem }. In: 2023 IEEE International Conference on
  Quantum Computing and Engineering (QCE). pp. 636--641. IEEE Computer Society,
  Los Alamitos, CA, USA (Sep 2023). \doi{10.1109/QCE57702.2023.00078},
  \url{https://doi.ieeecomputersociety.org/10.1109/QCE57702.2023.00078}

\bibitem{10189136}
Xia, S., Zhao, J.: Static entanglement analysis of quantum programs. In: 2023
  IEEE/ACM 4th International Workshop on Quantum Software Engineering (Q-SE).
  pp. 42--49 (2023). \doi{10.1109/Q-SE59154.2023.00013}

\bibitem{zhao2023qchecker}
Zhao, P., Wu, X., Li, Z., Zhao, J.: Qchecker: Detecting bugs in quantum
  programs via static analysis. In: 2023 IEEE/ACM 4th International Workshop on
  Quantum Software Engineering (Q-SE). pp. 50--57. IEEE (2023)

\end{thebibliography}

%




\end{document}